\setlist[enumerate,1]{label=(\roman*)}
\title{Recent Advances on Uniqueness of Competitive Equilibrium\thanks{This article was written for the 50th anniversary issue of the \textit{Journal of Mathematical Economics}. We thank {\"O}mer A\c{c}{\i}kg{\"o}z, Costas Arkolakis, Bar Light, and Eric Young for helpful comments.}
}
\author{Alexis Akira Toda\thanks{University of California San Diego. Email: \href{mailto:atoda@ucsd.edu}{atoda@ucsd.edu}.} \and Kieran James Walsh\thanks{ETH Z\"{u}rich. Email: \href{mailto:kwalsh@ethz.ch}{kwalsh@ethz.ch}.}}
\date{\today}
\newcommand{\cP}{\mathcal{P}}
\newcommand{\ES}{\mathrm{ES}}
\newcommand{\RRA}{\mathrm{RRA}}
\newcommand{\abs}[1]{\left\lvert #1 \right\rvert}
\newcommand{\norm}[1]{\left\lVert #1 \right\rVert}
\newcommand{\R}{\mathbb{R}}
\def\maximize{\operatornamewithlimits{maximize}}
\def\st{\operatorname{subject~to}}
\def\ie{\textrm{i.e.}}
\def\eg{\textrm{e.g.}}
\newcommand{\set}[1]{\left\{ #1 \right\}}
\def\E{\operatorname{E}}
\def\e{\mathrm{e}}
\theoremstyle{plain}
\newtheorem{thm}{\bf Theorem}
\newtheorem{lem}[thm]{\bf Lemma}
\newtheorem{prop}[thm]{\bf Proposition}
\numberwithin{equation}{section}
\numberwithin{thm}{section}
\begin{document}
\maketitle

\begin{abstract}

This article reviews the recent advances in the uniqueness and multiplicity of competitive equilibria in models arising in mathematical economics, finance, macroeconomics, and trade.

\medskip

{\bf Keywords:} general equilibrium theory, uniqueness

\medskip

{\bf JEL codes:} C62, D5
\end{abstract}

\section{Introduction}\label{sec:intro}

For most curious minds, understanding the classic supply and demand diagram for the first time results in epiphany. Simple and plausible assumptions about the desires of consumers and costs of producers lead to stark predictions about prices and economic activity in markets that at first seem chaotic and complicated. For example, a reduction in costs pushes down prices and increases output. But the clarity the classic model provides stems from the center of the diagram’s famous $\times$, that is, the unique point at which the price clears the market, supply equals demand, and deviations from the point sensibly converge back to it. If, say, the demand curve is non-monotonic in prices, then the $\times$ can become a rod of Asclepius, with demand snaking around supply and yielding many equilibrium points.\footnote{See, for example, \citet{Dougan1982}.} And with many equilibria, the predictions of equilibrium are ambiguous: the equilibria disagree on prices and quantities, and there can be unstable equilibrium points that defy the normal logic of convergence to them.

Therefore, to the extent that the goal of economics is to predict the result of market interactions and perform comparative statics and counterfactuals, the issue of uniqueness should be central to economics. But, perhaps unfortunately, it is easy to construct simple examples with multiple equilibria.\footnote{For instance, \citet{TodaWalsh2017ETB} construct multiple equilibria in Edgeworth box economies with CRRA, quadratic, quasi-linear, and general additively separable preferences.} On the other hand, there are numerous theories across many fields that admit a unique and stable equilibrium. This tension was described by \citet{Kehoe1984} in the \textit{Journal of Mathematical Economics}: 
\begin{quote}

``There are, of course, possibilities that are always present as long as we are not sure that a specific model has a unique equilibrium. Conditions that ensure uniqueness, however, seem to be too restrictive to have general applicability [\ldots]. Applied economists seem to view non-uniqueness of equilibrium as pathological. Theorists, on the other hand, seem to have accepted it as commonplace.''

\end{quote}

In this article, we survey recent advances on the theory of uniqueness of competitive equilibrium, highlighting contributions from the \textit{Journal of Mathematical Economics}. While our focus is on newer papers in the field of general equilibrium theory, we also discuss the classics as well as more applied theory in macroeconomics, trade, and finance.\footnote{Since all of the results we discuss have the common goal of establishing uniqueness of competitive equilibrium, our field labeling is somewhat arbitrary, and for our purposes the papers differ only in style and assumptions. Nevertheless, we hope our organization by field illuminates the history of thought and guides readers interested in particular applications.} 

To our knowledge, there has not been a survey of this kind in many years. \citet{Mas-Colell1991}, \citet{Kehoe1998}, and \citet{HensPilgrim2002} provided insightful summaries on which we draw. Other excellent surveys include Chapter 9 of \citet{ArrowHahn1991} and Chapter 17 of \citet{MWG1995}. But there have been numerous advances in the last 20 years, which we highlight in this article. Table \ref{tab:papers} lists the uniqueness conditions from many of the papers we discuss.

One caveat before we proceed: several of the papers we discuss are not explicitly about uniqueness. We attempt to provide context when possible, but note that in many cases the uniqueness results were not the main points of the papers.

\section{Anything Goes}\label{sec:anything}

The difficulty of establishing uniqueness can be illustrated with a simple example that would be familiar to any undergraduate economics major. Consider an economy with two agents ($1$ and $2$) and two goods (with consumption denoted by $x_1$ and $x_2$). Each agent is endowed with $e>0$ units of each good. Normalize to $1$ the price of good $1$, and let $p$ be the price of good $2$. The utility functions of the two agents are
\begin{align*}
    U_1(x_1,x_2) & = \alpha u(x_1) + (1-\alpha) u(x_2), \\
    U_2(x_1,x_2) & = (1-\alpha) u(x_1) + \alpha u(x_2),
\end{align*}
where $u$ is the common Bernoulli utility, and the utility weight $0<\alpha<1$ is symmetric across the agents. If $u(x)=\log x$, the agents' demands for good $1$ are $\alpha e (1+p)$ and $(1-\alpha) e (1+p)$. Since the aggregate supply of good $1$ is $2e$ and the aggregate demand $e(1+p)$ is linear in $p$, the unique equilibrium occurs at price $p=1$, as one would likely expect due to the symmetry and simplicity of the economy.

But now suppose the common Bernoulli utility is quadratic, $u(x) = x - \frac{1}{2\tau}x^2$, where $\tau>0$ governs the curvature (and risk tolerance in the case of uncertainty), and assume $\tau > e$ (which rules out consumption above the satiation point). As with log utility, symmetry yields an equilibrium at $p=1$. But, as \citet[Proposition 2]{TodaWalsh2017ETB} show, if $\alpha < \frac{2-\sqrt{2}}{4}$ or $\alpha > \frac{2+\sqrt{2}}{4}$, $\alpha=1/10$ for example, then there are exactly two more equilibria (for a total of three).\footnote{The reason is that low or high $\alpha$ renders total demand for good $1$ strictly increasing in its price ($1/p$) at $p=1$. Since total demand is continuous and must be above supply for low prices, there must be at least two more equilibria (and in this example exactly two).} 

What happened? The economy has strictly concave utilities, only two goods, and identical endowments, and the only source of heterogeneity comes from utility weights that are symmetric across the agents. Nonetheless the economy has three equilibria. Moreover, by switching to a different concave utility (log for example), we can restore uniqueness.

This example shows that, on the surface at least, there is a fine line between equilibrium models providing stark predictions or implying ``anything goes.'' Fortunately, as we describe below, quadratic utility's unattractive property of increasing absolute risk aversion can help drive multiplicity of equilibria. So, echoing \citet{Kehoe1984}, while the simplicity of this example makes non-uniqueness seem commonplace, quadratic utility is arguably pathological.

The term ``anything goes'' is often used as shorthand for the famous \citet{Sonnenschein1972,Sonnenschein1973}-\citet{Mantel1974}-\citet{Debreu1974} (SMD) theorems, the latter of which appeared in the very first 1974 issue of the \textit{Journal of Mathematical Economics}. To illustrate these results, suppose there are $J$ goods and agent $i \in I\coloneqq \set{1,\dots,I}$ solves
\begin{subequations}\label{eq:opt}
\begin{align}
    &\maximize && U_i(x) \\ 
    &\st && p \cdot x = p \cdot e_i,
\end{align}
\end{subequations}
where $U_i:\R_{++}^J\to \R$ is agent $i$'s (continuous and strongly monotonic) utility function, $p \gg 0$ is the vector of prices, and $e_i \in \R^J_+$ is $i$'s vector of endowments. Define $w_i = p \cdot e_i$ to be $i$'s income. Let demand $x_i(p,w_i)$ be the solution to Problem \ref{eq:opt}. Define $z_i(p) \coloneqq x_i(p,w_i) - e_i$ to be individual excess demand, let $z(p) \coloneqq \sum_i z_i(p) $ be the market excess demand function, and let $e = \sum_i e_i$ be the total endowment. \emph{Competitive equilibrium} consists of
\begin{enumerate*}
    \item prices $p^* \gg 0$ satisfying $z(p^*) = 0$ and
    \item the corresponding consumption allocations $x_i^* = x_i(p^*,p^* \cdot e_i)$.
\end{enumerate*}

Since the utilities are strictly increasing, the agents completely spend their endowments, implying $p \cdot z_i(p) = 0$. Adding across the agents, we see that market excess demand satisfies \emph{Walras' Law}: $p \cdot z(p)=0$ (the value of demand equals the value of the endowment for any prices). Therefore, one of the market clearing conditions always follows from the others. Furthermore, the individual excess demands $z_i(p)$ are homogeneous degree zero because only relative prices matter for solving Problem \ref{eq:opt}, implying $z(p)$ is homogeneous degree zero. So without some normalization, we can always form many equilibria by rescaling the equilibrium price vector. But this form of multiplicity is uninteresting because it does not affect consumption allocations, so we restrict possible equilibrium price vectors to $p \in \cP\coloneqq \set{p \in \R^J_{++}: p_1 = 1}$ (the first good is the num\'eraire).\footnote{Many papers use alternative normalizations, \eg $\sum_j p_j = 1$.}

The question of the uniqueness literature is, under what conditions is there a unique $p \in \cP$ satisfying $z(p) = 0$? To answer this question, we must derive the properties of $z$ from the underlying assumptions about the economy regarding preferences and endowments. In our example, we showed that market excess demand $z$ is homogeneous degree zero and satisfies Walras' Law. What else can we say about $z$? It turns out the answer, even in our restrictive example, is: not very much.

If we assume each $U_i$ is differentiably strictly quasi-concave (see \citet[p.~14]{VillanacciCarosiBenevieriBattinelli2002} for the definition), an application of the implicit function theorem implies that individual excess demands are continuously differentiable, implying market excess demand is also continuously differentiable. Consequently, as \citet{Debreu1970} shows, the economy is \emph{regular} for almost all initial endowments. Loosely, regular economies are ones in which market excess demand has non-zero slope at equilibrium prices, implying a finite number of equilibria. 

We also know that our individual excess demands satisfy the \emph{weak axiom of revealed preferences} (WARP; \ie $q \cdot z_i(p) \le 0$ and $p \cdot z_i(q) \le 0$ imply $z_i(p) = z_i(q)$).\footnote{In other words, WARP says that if an agent's demand at $p$ is affordable when facing prices $q$ but different from demand at $q$, demand at $q$ must be unaffordable when facing prices $p$ (as demand at $q$ is ``revealed preferred'' to demand at $p$). In our example \eqref{eq:opt}, the problem has a unique solution for any price vector, so if demand at $p$ is affordable at prices $q$, demand at $q$ provides strictly more utility (and must thus be unaffordable at prices $p$).} If market excess demand were to inherit WARP, equilibrium would be unique for almost all endowments: WARP would imply the set of equilibrium prices is convex,\footnote{Following \citet{Kehoe1998}, suppose $p$ and $q$ are equilibrium price vectors, and let $p(\theta) = \theta p + (1-\theta) q$, $\theta \in [0,1]$.  If $z(p(\theta)) \neq 0$, by Walras' Law, $(\theta p + (1-\theta) q) \cdot z(p(\theta)) = 0$, so we must have either $p \cdot z(p(\theta)) \le 0$ or $q \cdot z(p(\theta)) \le 0$. But since $z$ satisfies WARP, we must then have $p(\theta) \cdot z(p) > 0$ or $p(\theta) \cdot z(q) > 0$, either of which contradicts $z(p)=z(q)=0$. Thus, $z(p(\theta)) = 0$, proving the set of equilibrium price vectors is convex under WARP.} but by \citet{Debreu1970} the set is generically discrete, so the equilibrium set would be a single vector (generically). But, unfortunately, even in our example economy market excess demand does not necessarily inherit WARP. And this is the heart of the anything goes problem. Without WARP, an economy cannot behave as if there were an standard representative agent because the demands of any representative agent with standard preferences would satisfy WARP.

The SMD theorems tell us that we can find individual preferences and endowments that generate arbitrary continuous market excess demand functions, including ones that admit an arbitrary finite number of equilibria and thus violate WARP. Consider \citet{Debreu1974}, for example. Letting
\begin{equation*}
    S\coloneqq \set{p \in \R^J : p \gg 0,\norm{p} =1}
\end{equation*}
(where $\norm{p}\coloneqq \sum_j \abs{p_j}$ denotes the $\ell^1$ norm) and $S_\varepsilon\coloneqq \set{p \in S : (\forall j) p_j \ge \varepsilon}$, he shows that any continuous function $f: S \to \R^J$ satisfying Walras' Law on $S$ can (for every $\varepsilon>0$) be constructed on $S_\varepsilon$ as the sum of the excess demands of $J$ agents with strictly-convex, monotone, continuous, complete preferences on $\R^J_+$ and endowments in $\R^J_+$.\footnote{\citet{Geanakoplos1984}, in the \textit{Journal of Mathematical Economics}, shows how to construct \citet{Debreu1974}'s underlying utility functions.} Therefore, an endowment economy populated with agents with standard preferences can aggregate into an exotic, non-differentiable market excess demand function with many equilibrium points. Indeed, virtually anything is possible.

At the same time, it was known in the 1970s that homotheticity assumptions could imply a representative agent and uniqueness (\citet{Gorman1953} aggregation). In our example \eqref{eq:opt} , suppose for all $i\in I$ that
\begin{equation*}
    U_i(x)=U(x)\coloneqq \sum_{j=1}^J \beta_j \frac{x_j^{1-\gamma}}{1-\gamma},
\end{equation*}
where $\beta_j>0$ and $\gamma>0$ is the relative risk aversion coefficient. (The case $\gamma=1$ corresponds to log utility.) With these identical, constant relative risk aversion (CRRA) preferences, agent optimality implies
\begin{equation*}
    \frac{\beta_j}{\beta_1}(x_{ij}/x_{i1})^{-\gamma}=p_j\iff \frac{x_{ij}}{x_{i1}}=\left(\frac{\beta_1}{\beta_j}p_j\right)^{-1/\gamma}
\end{equation*}
for any $j \in J$. Summing over $i \in I$,\footnote{If $\frac{X_1}{Y_1}=\frac{X_2}{Y_2}=Z$ then $\frac{X_1+X_2}{Y_1+Y_2}=Z$.} using market clearing, and rearranging, we arrive at $\frac{\beta_j}{\beta_1} \left( \frac{e_j}{e_1} \right)^{-\gamma} = p_j$ (where $e_j$ and $e_1$ refer to total endowments). Therefore, equilibrium prices are uniquely given by the preference parameters and aggregate endowment, and it is as if there were a representative agent with the common preferences and endowed with $e$. Intuitively, when preferences scale-up proportionally, heterogeneity can ``average out.'' One might thus hope that imposing homothetic preferences can let us say much more about $z$. However, the remarkable result of \citet{Mantel1976} dispels this hope: he proves a version of SMD with homothetic agent preferences.\footnote{\citet{Hens2001}, in the \textit{Journal of Mathematical Economics}, extends \citet{Mantel1976} to incomplete markets.} \citet[Proposition 1]{TodaWalsh2017ETB} show how to construct two-good, two-agent, symmetric CRRA versions of example \eqref{eq:opt} with three equilibria (an thus market excess demand violating WARP).

The difficulty is that even with CRRA utility, individual demands can display Giffen behavior due to income effects. For example, \citet{Won2023} shows that with two goods and CRRA utility, demand always becomes strictly upward sloping above some price threshold due to income effects. With a sufficiently high price, any agent will become a seller, and once the price rises far enough the income effect from selling will make demand upward sloping. So even with CRRA utility, without some restriction on endowments and utility weights (the $\beta_j$'s above, which could be individual-specific), it is possible to find an equilibrium where market demand displays Giffen behavior. But since market demand rises above supply for very low prices, and since demand is continuous in this context, a Giffen equilibrium in the CRRA economy implies non-uniqueness: demand will be below supply for a slightly lower price, implying another equilibrium as demand eventually rises above supply at very low prices.

Income effects are the heart of the difficulty in establishing uniqueness, since compensated demand (the substitution effect) is downward sloping under general conditions. Even with relatively strong assumptions on utility, individual excess demands can be non-monotonic since price changes affect wealth. And if, due to heterogeneity, individual demand inflection points occur across a large price range, market demand can take many shapes and intersect supply at many prices.

The discussion of this section illustrates the tight link between the questions of aggregation and uniqueness.\footnote{See \citet{Hildenbrand1999} and the corresponding 1999 special issue of the \emph{Journal of Mathematical Economics} on demand aggregation.} If a heterogeneous agent economy is equivalent to an economy with a representative agent with standard preferences, then equilibrium will be unique because there is no scope for Giffen behavior at the equilibrium. When you cannot construct a standard representative agent, uniqueness is still possible, but the problem becomes more difficult because you cannot appeal to properties of a single agent's demand function. The \citet{Gorman1953} representative agent is the most famous example of aggregation, but the proof we provided above (adding across Euler equations with common utility weights) also works in two other cases, heterogeneous quadratic Bernoulli utility\footnote{The authors of this survey learned the quadratic result in John Geanakoplos' 2008 graduate micro theory course. The result appears in \citet{HensPilgrim2002} but without citation, and it seems to have been known for a long time.} and heterogeneous constant absolute risk aversion (CARA) Bernoulli utility.

But there are at least three sets of assumptions not covered by the SMD theorems: quantitative bounds on utilities, restrictions on the joint distribution of preferences and endowments, and restrictions on the number of goods or agents. The remainder of this article describes such assumptions, many of which commonly appear in applied work, that are sufficient for uniqueness.

\section{General Equilibrium Theory}

\subsection{Classic Uniqueness Results}\label{sec:classic}

As described in Section \ref{sec:anything}, equilibrium is unique if the economy is regular and market excess demand satisfies WARP. This result, attributed to \citet{Wald1951} by \citet{ArrowHahn1991} and
 \citet{Kehoe1998}, holds because the set of equilibrium prices is both convex (by WARP) and discrete (by regularity). Another classic sufficient condition for uniqueness is the ``gross substitutes'' property, which \citet{ArrowHahn1991} also attribute to \citet{Wald1951}. As described in \citet{MWG1995}, market excess demand $z$ has the gross substitute property if $z_j(p') > z_j(p)$ whenever $p'_k>p_k$ for some $k$ and $p'_j=p_j$ for $j \neq k$. That is, demand rises for a good when the price of a different good increases. In an endowment economy (without production), the gross substitutes property is sufficient for uniqueness (see \citet{MWG1995} for an intuitive proof).

But WARP and gross substitutes are conditions on market demand, not on the economy's primitives, and the key question in applied settings is how the underlying preferences and endowments map into the number equilibria. In Section \ref{sec:anything} we discussed that identical homothetic utility yields uniqueness (\citet{Gorman1953} aggregation), but there are several other similarly concise results long-known in the literature. Consider an $I$-agent, $J$-good economy with additively separable utility functions $U_i(x)=\sum_j u_{ij}(x)$, where $u_{ij}'>0$ and $u_{ij}''<0$. If the relative risk aversion for every agent $i$ and good $j$, $-xu_{ij}''(x)/u_{ij}'(x)$, is everywhere bounded above by one, then individual as well as market excess demands satisfy the gross substitutes property. \citet{HensLoeffler1995} prove this, citing related results in \citet{Varian1975}, \citet{Wald1951}, \citet{Slutsky1915}, and (for an infinite number of goods) \citet{Dana1993}.\footnote{See also \citet{Dana1993jme} in the \emph{Journal of Mathematical Economics}.} \citet{HensLoeffler1995} also prove uniqueness for two other stark cases:
 \begin{enumerate*}
     \item CARA Bernoulli utility and
     \item constant aggregate endowment, agent-specific Bernoulli utility, and common utility weights.
 \end{enumerate*}
In our example \eqref{eq:opt}, the first case means
\begin{equation*}
    u_{ij}(x)=-\frac{\beta_{ij}}{\alpha_i}\e^{-\alpha_i x_j},
\end{equation*}
allowing for heterogeneous risk aversion $\alpha_i>0$ and utility weights $\beta_{ij}>0$. The second case means $u_{ij}(x)=\beta_ju_i(x)$ for individual-specific $u_i$ and good-specific $\beta_j$, and $e_j = e_k$ for all $j,k \in J$ (the goods all have the same supply).\footnote{When we interpret the weights as probabilities and the goods as state-dependent consumption, common weights mean homogeneous beliefs, and constant aggregate endowments means there is no aggregate risk.}

 However, neither relative risk aversion less than one, nor identical homothetic utility, nor CARA, nor constant aggregate endowments is necessary for uniqueness. The literature discovered that collinear endowments can allow for weaker assumptions on utility while still preserving uniqueness. In our example \eqref{eq:opt}, endowment collinearity means $e_i = \omega_i e$, for the set of constants $\omega_i>0$. This case allows for wealth inequality but implies that the wealth distribution is independent of the price vector since $(p \cdot \omega_i e)/(p \cdot e) = \omega_i$. \citet{Eisenberg1961} and \citet{Chipman1974} show that with heterogeneous homothetic utility and collinear endowments, you get a representative agent with a utility function equal to the sum of $\log$ individual utilities weighted by $\omega_i>0$, maximized subject to the aggregate resource constraint (see \citet{Kehoe1998}). The representative agent problem satisfies standard assumptions, so WARP holds and equilibrium is unique. \citet{MitjushinPolterovich1978} show that with collinear endowments you can drop hometheticity, provided that RRA is everywhere less than four.\footnote{\citet{Bettzuege1998}, in the \emph{Journal of Mathematical Economics}, extends \citet{MitjushinPolterovich1978} to incomplete markets. \citet{Dana1995}, also in the \emph{Journal of Mathematical Economics}, extends the results to an infinite number of goods and incomplete markets.} 

Both \citet{MitjushinPolterovich1978} and \citet{HensLoeffler1995} (their Equation (7), for example) illustrate the importance of utility curvature for uniqueness: with a high elasticity of substitution (low risk aversion), income effects are dampened, allowing more scope for uniqueness. On the other hand, when the elasticity of substitution is low and agents have a high desire for smoothness, income effects are amplified, and there is more scope for Giffen behavior.

\subsection{Recent Advances}

Several recent papers have made progress on the uniqueness problem by assuming there are only two goods. This restrictive approach is useful for two reasons. First, two period models (in which the goods could be consumption today and tomorrow and the price is the interest rate) are widely employed in applied settings. Second, assuming two goods greatly simplifies analysis because it implies market excess demand is a univariate function. 

Suppose in our example \eqref{eq:opt} that there are two goods, $J=2$. \citet{GeanakoplosWalsh2018Uniqueness} further assume that all agents have the same endowment $e_i = (e_1,e_2)$, as well as the same Bernoulli utility. Heterogeneity stems purely from the utility weights, which we can think of as representing time preference if the two goods are consumption today and tomorrow. Despite the many assumptions in this setting, it is relevant in that versions of it arise in \citet{DiamondDybvig1983}-type models, in which ex ante identical agents face idiosyncratic time preference shocks after some initial portfolio choice decision. \citet{GeanakoplosWalsh2018Uniqueness} show that decreasing (nonincreasing) absolute risk aversion (DARA) implies uniqueness.\footnote{\citet{GeanakoplosWalsh2018Liquidity} use this result for signing welfare comparative statics in \citet{DiamondDybvig1983}-type economies.} In the notation of our example \eqref{eq:opt}, DARA means
\begin{equation*}
    U_i(x)=\sum_j \beta_{ij}u_i(x_j),
\end{equation*}
where agent $i$'s absolute risk aversion $-u_i''(x)/u_i'(x)$ is decreasing in $x$. DARA is a weak assumption in that it is satisfied by many commonly-used utility functions (like CRRA) and violation of DARA imply, counterfactually, that the rich spend less on risky assets. DARA yields uniqueness in this setting because it ensures that the derivative of good $2$ consumption with respect to wealth, $\partial x_{i2} / \partial w_i$, covaries positively with demand $x_{i2}$. With DARA, high consumption corresponds to a low sensitivity of marginal utility and thus a high sensitivity of demand to changes in wealth. Consequently, the natural buyers, whose income effects go in the ``right'' direction, dominate the aggregate income effect from a price increase. The result is that market demand is downward-sloping at any equilibrium, which implies uniqueness by continuity. However, if absolute risk aversion is increasing, as in the quadratic utility example from Section \ref{sec:anything}, the natural sellers dominate the aggregate income effect, opening the door for aggregate Giffen behavior and multiple equilibria.

The proof in \citet{GeanakoplosWalsh2018Uniqueness} relies on agents having the same endowment, which ensures demand has the same order as excess demand across agents, but they show that if Bernoulli utility is CRRA, uniqueness still holds with endowment heterogeneity, as long as the endowments are ordered according to utility weights. Specifically, patient agents must initially have more of the first good and less of the second good.

But it turns out there is more to say about uniqueness in the CRRA setting with two goods and risk aversion greater than one (elasticity of substitution $\varepsilon<1$), and \citet{Won2023} provides the definitive analysis. First, he shows that individual CRRA demand has two inflection points as the price of good $2$ rises.\footnote{As above, the price of good $1$ is normalized to one.} The first inflection price is where the income effect begins to dominate and demand becomes upward-sloping, and the second price is where the income effect maximally dominates, after which demand asymptotes to the endowment $e_{i2}$ (so demand is strictly quasi-convex in the price).\footnote{\citet{Gimenez2022} derives a similar result for general, non-separable, strictly quasi-concave utility with two goods. Specifically, he shows that relative risk aversion greater than one implies the goods are gross complements in some price range.} Second, he shows how to bound the set of equilibrium prices. The bounds are based on the extrema of two types of artificial economies: the economies consisting of the individual agents in autarky and the economies where a representative agent has the utility function of one of the agents. Third, realizing that the location of the individual inflection points relative to the price bounds determines the shape of market demand over the possible equilibrium price range, \citet{Won2023} derives several sufficient conditions for uniqueness. In the spirit of (but more general than) the \citet{GeanakoplosWalsh2018Uniqueness} CRRA result, each condition takes the form of an agent-specific upper bound on $e_{i2}/e_{i1}$, that is, individual endowment growth. The upper bounds depend only on the individual's utility parameters and the maximum of the price range. Corollary 5 is the most stark result: if either
\begin{enumerate*}
    \item the agents have the same substitution elasticity $\varepsilon$ or
    \item the most patient agent is sufficiently impatient (see the paper for details),
\end{enumerate*}
then the following simple condition (for all $i \in I$) ensures uniqueness:
\begin{equation}\label{eq:Won}
 \frac{e_{i2}}{e_{i1}} < \frac{\varepsilon_i}{1-\varepsilon_i} \left( \frac{1}{\rho_{\max}} \left( \frac{e_2}{e_1} \right)^{1/\varepsilon_i} + \frac{1}{\varepsilon_i} \left( \frac{\rho_i}{\rho_{\max}} \right)^{\varepsilon_i} \frac{e_2}{e_1}  \right),   
\end{equation}
where $\rho_i = \beta_{i2}/\beta_{i1}$ is the patience (relative utility weight) of agent $i$ and $\rho_{\max} = \max_i\rho_i$ is the patience of the most patient agent. In short, if the condition holds for all $i$, individual Giffen behavior lies outside of the equilibrium price bounds, which depend on the (easily calculated) extrema of the artificial and representative agent economies. Interestingly, as risk aversion approaches one for all agents ($\varepsilon_i\uparrow 1$), the right-hand side of \eqref{eq:Won} tends to infinity and hence the inequality is satisfied, providing another illustration of the \citet{HensLoeffler1995} $\RRA \le 1$ result.

Finally, \citet{Won2023} extends his findings to an arbitrary number of goods, specifically the case with identical risk aversion and identical utility weights, except for the first good. The natural interpretation is that good $1$ is consumption today, the second period endowment is random with common probabilities, and patience is heterogeneous (so the goods $2,\dots,J$ are state-dependent consumption in different states of the world in the second period). In this case, the second period ``bundles'' together in a unique fashion (by standard constant elasticity of substitution (CES) analysis), and the economy reduces to \citet{Won2023}'s original two good setting.

\citet{Gimenez2022}, in the \emph{Journal of Mathematical Economics}, provides a comprehensive analysis for the case with two agents and two goods, adopting a novel ``offer curve'' approach and deriving sufficient conditions for uniqueness based on the shapes of offer curves.\footnote{With two goods, the offer curve for an agent is the set of demand points mapped out in the Edgeworth box as one varies the relative price.} While analysis is restricted to two agents, \citet{Gimenez2022} allows for general convex preferences that, for example, need not be separable. One of the main results is that we get uniqueness if, for at least one of the goods, both agents' offers curves are monotonic in the direction of that good. So his theorem allows offer curves to ``bend back'' in the direction of one good (corresponding to Giffen behavior), but then both curves must be monotonic in the direction of the other good. When the condition of the theorem fails but the agents' offers curves are monotonic in opposite directions, \citet{Gimenez2022} derives many uniqueness conditions, (in a similar spirit to \citet{Won2023}) that jointly restrict preference and endowments based on the location of inflection points. He also provides many results with separable utility, for example generalizing \citet{GeanakoplosWalsh2018Uniqueness} (in the two agent case).

\citet{Chipman2010} likewise considers the case with two goods and two agents. With identical CRRA Bernoulli utility and symmetric endowments and utility weights ($e_{11} = e_{22}$, $e_{12}=e_{21}$, $\beta_{11}=\beta_{22}$, $\beta_{12}=\beta_{21}$), equilibrium is unique if $\RRA \le 2$.

Finally, Andrea Loi and Stefano Matta have a sequence of recent articles related to uniqueness, three of which appear in the \emph{Journal of Mathematical Economics}. \citet{LoiMatta2018} and \citet{LoiMattaUccheddu2023} are based on \citet{Balasko1980} (all in the \emph{Journal of Mathematical Economics}). \citet{Balasko1980} shows that
\begin{enumerate*}
    \item if the commodity space is $\R^J$ (vs. $\R^J_{++}$ in our motivating example) and
    \item if there is a unique equilibrium for any distribution of endowments (fixing the total),
\end{enumerate*}
then equilibrium prices do not depend on the distribution of endowments. \citet{LoiMatta2018} and \citet{LoiMattaUccheddu2023} prove the converse for the two good and two agent cases, respectively. Their sufficient condition for uniqueness is, specifically, that the equilibrium manifold\footnote{The set of pairs of prices and endowments for which markets clear.} has zero curvature.\footnote{See also \citet{LoiMatta2021} in the \emph{Journal of Mathematical Economics}.}

Finally, \citet{LoiMatta2023} answer a question posed in \citet{TodaWalsh2017ETB}: are multiple equilibria possible with CRRA utility and two agents/goods if $1<\RRA \le 2$? The authors show the answer is ``no,'' while proving a much more powerful result under the assumption of hyperbolic absolute risk aversion (HARA) Bernoulli utility, in particular,
\begin{equation}\label{eq:hara}
u(x) = \frac{\gamma}{1-\gamma} \left( b + \frac{a}{\gamma} x  \right)^{1-\gamma}    
\end{equation}
with $0<\gamma \neq 1$, $a>0$, and $b \ge 0$, which satisfies DARA. \citet{LoiMatta2023} prove, with a new method using Newton's symmetric polynomials and Descartes' rule of signs, that $\gamma \in \left(1,\frac{I}{I-1} \right]$ is sufficient for uniqueness with two goods, arbitrary utility weights, and common HARA Bernoulli utility \eqref{eq:hara}. Since $I$ is the number of agents and CRRA is included in \eqref{eq:hara}, this answers the question of \citet{TodaWalsh2017ETB}.

\subsection{Finance and Incomplete Markets}\label{sec:finance}

Closely related to the papers we have discussed thus far is general equilibrium with incomplete markets (GEI). See \citet{Geanakoplos1990} (\emph{Journal of Mathematical Economics}) for an early survey. In our example \eqref{eq:opt}, GEI replaces the budget constraint $p \cdot x_i = p \cdot e_i$ with
\begin{subequations}
\begin{align}
    q \cdot \theta_i & = 0 \label{eq:GEI1} \\
    x_i &= e_i + A\theta_i \label{eq:GEI2},
\end{align}
\end{subequations}
where $\theta_i \in \R^L$ is $i$'s portfolio of the $L$ assets (which are zero net supply) and $q$ are the asset prices. Recalling that $J$ is the number of goods, $A$ is the $J \times L$ matrix of exogenous asset payoffs. Equilibrium is defined to be asset prices and portfolios such that portfolios maximize utility over consumption given prices and the asset market clears: $\sum_i \theta_i = 0$. Typically, the utility weights $\beta_{ij}$ are interpreted as probabilities and the goods as consumption in different states of nature, meaning each column of $A$ is the state-dependent payoffs of an asset (a column of ones is the risk-free asset). Note that it is trivial to
\begin{enumerate*}
    \item allow for a positive supply of assets or
    \item create a two-period model by adding a good $j=0$ and setting the first row and column of $A$ to $(1,0,0,\dots)$.
\end{enumerate*}
Usually, it is assumed the columns of $A$ are linear independent, meaning there are no redundant assets. So if $L=J$, the market is complete and the economy is equivalent to the one we originally described in example \eqref{eq:opt}. However, when $L<J$, the market is incomplete, and there are (potentially Pareto-improving) trades that cannot take place between the agents because the assets do not span the space of goods.

As extensively surveyed in \citet{HensPilgrim2002} (see their very useful Figure 6.1), versions of many of the classic complete markets uniqueness results carry over to GEI, and we discuss a few here related to the \emph{Journal of Mathematical Economics}.\footnote{There are also uniqueness papers in the more-specialized CAPM literature. See, for example, \citet{Dana1999} and \citet{HensLaitenbergerLoeffler2002}, both in the \emph{Journal of Mathematical Economics}.} Most of the proofs are quite similar to the complete markets ones, but they usually require some additional assumptions, especially on the structure of the payoff matrix $A$. For the case with separable utility and an infinite number of goods (\ie, states), \citet{Dana1995} shows the \citet{MitjushinPolterovich1978} result (uniqueness with collinear endowments and $\RRA <4$) holds if the endowments are tradeable, that is, $e_i$ is in the span of $A$. For a finite number of goods, \citet{HensPilgrim2002} show one does not even need separability for this result. Also with separability, \citet{Bettzuege1998} is able to establish \citet{MitjushinPolterovich1978} in GEI, but dropping endowment tradeability and replacing it with his alternate ``fundamental set'' assumption.

\citet{Dana1995} further shows that \citet{Eisenberg1961}--\citet{Chipman1974} (uniqueness with homothetic utility and collinear endowments) holds in her GEI setting with endowment tradeability. Also, with a finite number of goods and tradeable endowments, \citet{Eisenberg1961}--\citet{Chipman1974} holds without separability: see \citet[Theorem 1]{TodaWalsh2020}. Using this uniqueness result, \citet[Theorem 2]{TodaWalsh2020} derive comparative statics results between the wealth distribution and the equity premium.

\section{Uniqueness in Macroeconomics}\label{sec:macro}

Much of macroeconomics employs a representative agent or considers economies in which equilibrium is equivalent to the solution to a planning problem. In both cases, uniqueness is often the natural outcome. With a representative agent, this is because market excess demand will satisfy WARP, and we know in advance that the agent will have to consume all resources in equilibrium, which implies a unique set of prices through the agent's necessary conditions for optimality. And concave planning problems with convex constraints have unique solutions, so we can also get uniqueness if equilibrium reduces to a particular planning problem.

Heterogeneous agent macroeconomic models are, however, a different story, unless you can combine the budget constraints and write the economy in date-0 Arrow-Debreu form, in which classic results from GE theory may be available \citep{KehoeLevine1985}. With incomplete markets, random income, and credit frictions, one can quickly reenter the world of anything goes.

\subsection{Overlapping Generations Models}

In overlapping generations (OLG) models, not only multiplicity of equilibria but also equilibrium indeterminacy is possible, meaning the existence of a continuum of equilibria in a neighborhood of a particular equilibrium. This possibility was first pointed out by \citet{Gale1973}, who studies the classic \citet{Samuelson1958} endowment economy. To illustrate this result, consider the classical two-period overlapping generations model with a single perishable good. Let $U(y_t,z_{t+1})$ be the utility function of generation $t$, where $(y_t,z_{t+1})$ is consumption when young and old. For simplicity, suppose $U$ is additively separable, so
\begin{equation*}
    U(y,z)=u(y)+\beta v(z),
\end{equation*}
where $u,v$ are the utility of the young and old and $\beta>0$ is the discount factor. We assume $u$ is twice continuously differentiable, $u'>0$, $u''<0$, satisfies the Inada condition $u'(0)=\infty$, and likewise for $v$. Each period, the young and old have endowments $(a,b)$, where $a>0$ and $b\ge 0$. In addition, the initial old are endowed with an intrinsically useless asset (\ie, an asset paying no dividends like fiat money) in unit supply.

Letting $P_t\ge 0$ be the price of the asset at time $t$ in units of the consumption good, the budget constraints of generation $t$ are
\begin{align*}
    &\text{Young}: & y_t+P_tx_t&=a,\\
    &\text{Old}: & z_{t+1}&=b+P_{t+1}x_t,
\end{align*}
where $x_t$ is asset holdings. As usual, a competitive equilibrium is defined by a sequence $\set{(P_t,x_t,y_t,z_t)}_{t=0}^\infty$ such that
\begin{enumerate*}
    \item each generation maximizes utility subject to the budget constraints,
    \item commodity markets clear, so $y_t+z_t=a+b$,
    \item asset markets clear, so $x_t=1$.
\end{enumerate*}

In any equilibrium, the budget constraints imply the equilibrium allocation $(y_t,z_{t+1})=(a-P_t,b+P_{t+1})$. Therefore the equilibrium is completely determined by the price sequence $\set{P_t}_{t=0}^\infty$. It is easy to show that either $P_t=0$ for all $t$ or $P_t>0$ for all $t$ \citep[\S2]{HiranoToda2024JME}. We introduce some terminology to classify equilibria. We say that an equilibrium is
\begin{enumerate*}
    \item \emph{stationary} or a \emph{steady state} if $P_t=P$ is constant,
    \item  \emph{non-monetary} (or \emph{fundamental}) if $P_t=0$ for all $t$,
    \item \emph{monetary} (or \emph{bubbly}) if $P_t>0$ for all $t$,
    \item \emph{asymptotically non-monetary} if $\liminf_{t\to\infty}P_t=0$,
    \item \emph{asymptotically monetary} if $\liminf_{t\to\infty}P_t>0$.
\end{enumerate*}
By definition, non-monetary equilibria are asymptotically non-monetary, and asymptotically monetary equilibria are monetary. However, there could be monetary equilibria that are asymptotically non-monetary.

Because the non-monetary equilibrium is obviously unique ($P_t\equiv 0$), we focus on monetary equilibria. Take any monetary equilibrium, so $P_t>0$ for all $t$. Using the budget constraints to eliminate $(y_t,z_{t+1})$, generation $t$ seeks to maximize
\begin{equation*}
    u(a-P_tx_t)+\beta v(b+P_{t+1}x_t).
\end{equation*}
Taking the first-order condition and imposing the market clearing condition $x_t=1$, we obtain the equilibrium condition
\begin{equation}
    -u'(a-P_t)P_t+\beta v'(b+P_{t+1})P_{t+1}=0. \label{eq:eqcond}
\end{equation}
The following lemma provides a necessary and sufficient condition for the existence of a monetary steady state.

\begin{lem}\label{lem:ss_bubbly}
There exists a monetary steady state if and only if $u'(a)<\beta v'(b)$.
\end{lem}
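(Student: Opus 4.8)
The plan is to reduce the existence of a monetary steady state to finding a zero of a single monotone function and then invoke the intermediate value theorem. First, substitute $P_t=P_{t+1}=P>0$ into the equilibrium condition \eqref{eq:eqcond}; dividing through by $P$ shows that a constant price $P$ is a monetary steady state if and only if $P\in(0,a)$ (so that the young's consumption $y=a-P$ is positive, while the old's consumption $z=b+P>0$ automatically) and
\begin{equation*}
g(P)\coloneqq \beta v'(b+P)-u'(a-P)=0.
\end{equation*}

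Next I would analyze $g$ on the interval $[0,a)$. It is continuous there, and differentiating gives $g'(P)=\beta v''(b+P)+u''(a-P)<0$ by the strict concavity of $u$ and $v$, so $g$ is strictly decreasing. For the boundary behavior, $g(0)=\beta v'(b)-u'(a)$, whereas as $P\uparrow a$ we have $a-P\downarrow 0$, so the Inada condition $u'(0)=\infty$ forces $g(P)\to-\infty$.

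Finally I would combine these facts. If $u'(a)<\beta v'(b)$, then $g(0)>0$ while $\lim_{P\uparrow a}g(P)=-\infty<0$, so by continuity there is a zero $P^*\in(0,a)$ of $g$ (unique, by strict monotonicity), and this $P^*$ is a monetary steady state. Conversely, if $u'(a)\ge \beta v'(b)$, then $g(0)\le 0$, and strict monotonicity gives $g(P)<g(0)\le 0$ for every $P\in(0,a)$, so no monetary steady state exists; note that $P=0$ does not qualify, since monetary equilibria require $P_t>0$.

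The argument presents no serious obstacle. The only points needing care are the feasibility restriction $P<a$ and the appeal to the Inada condition to pin down the limit of $g$ as $P\uparrow a$; together these ensure that the sign change of $g$, when it occurs, happens strictly inside the admissible interval $(0,a)$, which is exactly what is needed for the ``if'' direction, while strict monotonicity of $g$ delivers the ``only if'' direction cleanly.
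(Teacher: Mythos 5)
Your proposal is correct and follows essentially the same route as the paper: the function $g$ is exactly the paper's $\psi$, and both arguments rest on strict monotonicity of that function, the Inada condition to force the value $-\infty$ at $P=a$, and the intermediate value theorem, with the ``only if'' direction handled by monotonicity (you via the contrapositive, the paper directly). The only difference is cosmetic — your explicit restriction to the feasible interval $(0,a)$ makes precise what the paper leaves implicit when it writes $\psi(a)=-\infty$.
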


\begin{proof}
If a monetary steady state $P>0$ exists, setting $P_t=P_{t+1}=P$ in \eqref{eq:eqcond} and dividing by $P>0$, we obtain
\begin{equation*}
    \psi(P)\coloneqq -u'(a-P)+\beta v'(b+P)=0.
\end{equation*}
Since $\psi'(P)=u''(a-P)+\beta v''(b+P)<0$, it follows that
\begin{equation*}
    0=\psi(P)<\psi(0)=-u'(a)+\beta v'(b)\implies u'(a)<\beta v'(b).
\end{equation*}

Conversely, suppose $u'(a)<\beta v'(b)$. Then
\begin{equation*}
    \psi(0)=-u'(a)+\beta v'(b)>0>-\infty=-u'(0)+\beta v'(b+a)=\psi(a).
\end{equation*}
By the intermediate value theorem, there exists $P>0$ such that $\psi(P)=0$. Furthermore, since $\psi$ is strictly decreasing, such a $P$ is unique. Therefore there exists a unique monetary steady state.
\end{proof}

We next study the local determinacy of equilibria around the steady states. To this end, we write the equilibrium condition \eqref{eq:eqcond} as $\Phi(P_t,P_{t+1})=0$, where
\begin{equation}
    \Phi(\xi,\eta)=-u'(a-\xi)\xi+\beta v'(b+\eta)\eta. \label{eq:Phi}
\end{equation}
To apply the implicit function theorem, we compute the partial derivatives
\begin{subequations}\label{eq:Phi_partial}
\begin{align}
    \Phi_\xi(\xi,\eta)&=-u'(a-\xi)+u''(a-\xi)\xi, \label{eq:Phi_xi}\\
    \Phi_\eta(\xi,\eta)&=\beta v'(b+\eta)+\beta v''(b+\eta)\eta. \label{eq:Phi_eta}
\end{align}
\end{subequations}
Therefore if $\Phi_\eta\neq 0$, we may locally solve $\Phi(\xi,\eta)=0$ as $\eta=\phi(\xi)$, where
\begin{equation}
    \phi'(\xi)=-\frac{\Phi_\xi}{\Phi_\eta}=\frac{u'(a-\xi)-u''(a-\xi)\xi}{\beta v'(b+\eta)+\beta v''(b+\eta)\eta}. \label{eq:phi'}
\end{equation}
Let $P$ be a steady state. If $\abs{\phi'(P)}\neq 1$, the Hartman-Grobman theorem \citep[Theorem 4.6]{Chicone2006} implies that the local behavior of the dynamical system $\Phi(P_t,P_{t+1})=0$ (or $P_{t+1}=\phi(P_t)$) is qualitatively the same as that of the linearized system
\begin{equation*}
    P_{t+1}-P=\phi'(P)(P_t-P).
\end{equation*}
Therefore if $\abs{\phi'(P)}>1$, then the unique $\set{P_t}$ converging to $P$ is $P_t=P$ and the equilibrium is \emph{locally determinate}; if $\abs{\phi'(P)}<1$, then for any $P_0$ sufficiently close to $P$, there exists $\set{P_t}$ converging to $P$ and the equilibrium is \emph{locally indeterminate}.

The following proposition characterizes the local determinacy of the non-monetary steady state.

\begin{prop}[\citealp{Gale1973}, Theorem 4]\label{prop:b>0}
Let $b>0$.
\begin{enumerate}
    \item If $u'(a)>\beta v'(b)$, the non-monetary steady state is locally determinate, \ie, if $\set{P_t}_{t=0}^\infty$ is an equilibrium with $P_t\to 0$, then $P_t=0$ for all $t$.
    \item If $u'(a)<\beta v'(b)$, the non-monetary steady state is locally indeterminate, \ie, for any $P_0>0$ sufficiently close to 0, there exists a monetary equilibrium $\set{P_t}_{t=0}^\infty$ with $P_t\to 0$.
\end{enumerate}
\end{prop}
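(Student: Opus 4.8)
The plan is to specialize the implicit-function-theorem and Hartman--Grobman machinery already assembled in the excerpt to the steady state $P=0$. First I would evaluate the partial derivatives \eqref{eq:Phi_partial} at $(\xi,\eta)=(0,0)$. The hypothesis $b>0$ is what makes this work: the Inada condition does not bite at $b$, so $v'(b)$ is finite and positive, and therefore $\Phi_\xi(0,0)=-u'(a)$ and $\Phi_\eta(0,0)=\beta v'(b)\neq 0$. The implicit function theorem then yields, on a neighborhood of $0$, the forward law of motion $P_{t+1}=\phi(P_t)$ with, from \eqref{eq:phi'},
\[
\phi'(0)=-\frac{\Phi_\xi(0,0)}{\Phi_\eta(0,0)}=\frac{u'(a)}{\beta v'(b)}>0 .
\]

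Second, I would simply compare $\phi'(0)$ with $1$. Since $\beta v'(b)>0$, we have $\phi'(0)>1\iff u'(a)>\beta v'(b)$ and $\phi'(0)<1\iff u'(a)<\beta v'(b)$; the borderline $\phi'(0)=1$, i.e.\ $u'(a)=\beta v'(b)$, is precisely the case the proposition omits (and, by Lemma~\ref{lem:ss_bubbly}, the one separating existence from non-existence of a monetary steady state). Thus case (i) lands in the $\abs{\phi'(P)}>1$ regime and case (ii) in the $\abs{\phi'(P)}<1$ regime, and the stated conclusions follow from the local-determinacy discussion preceding the proposition. For case (ii) I would make the indeterminacy explicit: take $P_0>0$ small; because $\Phi(\xi,0)=-u'(a-\xi)\xi<0$ for small $\xi>0$ while $\Phi_\eta>0$ near the origin, the solution $\phi(\xi)$ is strictly positive for small $\xi>0$, so the forward orbit $P_t=\phi^{t}(P_0)$ stays strictly positive; being in the contracting regime it converges to $0$, and for $P_0$ small the allocation $(y_t,z_{t+1})=(a-P_t,b+P_{t+1})$ is strictly positive, so the orbit is a genuine monetary equilibrium.

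The only real care-point — the main \emph{obstacle}, modest though it is here — is translating the abstract statement about orbits of $P_{t+1}=\phi(P_t)$ into the economic statement about equilibria with the sign constraint $P_t\ge 0$, and this matters most in case (i). There I would either invoke the Hartman--Grobman conjugacy directly, or argue by hand: by continuity of $\phi'$ there are $\delta'>0$ and $\mu>1$ with $\phi(P)\ge\mu P$ for $P\in[0,\delta']$; recall from the excerpt that any equilibrium satisfies either $P_t=0$ for all $t$ or $P_t>0$ for all $t$; if the latter held with $P_t\to 0$, pick $T$ with $P_t<\delta'$ for all $t\ge T$, whence $P_{t+1}=\phi(P_t)\ge\mu P_t>P_t$, so $\{P_t\}_{t\ge T}$ is strictly increasing and bounded below by $P_T>0$, contradicting $P_t\to 0$. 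Hence $P_t\equiv 0$, which is the asserted determinacy. I would also flag that $b>0$ is genuinely needed to keep $\Phi_\eta(0,0)$ finite and nonzero, so that this argument does not cover $b=0$.
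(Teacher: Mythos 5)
Your proposal is correct and follows essentially the same route as the paper: evaluate \eqref{eq:Phi_partial} at the origin (where $b>0$ keeps $\Phi_\eta(0,0)=\beta v'(b)$ finite and nonzero), obtain $\phi'(0)=u'(a)/\beta v'(b)$ from \eqref{eq:phi'}, and read off determinacy or indeterminacy by comparing $\phi'(0)$ with $1$ as in the discussion preceding the proposition. The extra steps you supply --- the monotone-expansion argument in case (i) using the dichotomy $P_t\equiv 0$ or $P_t>0$ for all $t$, and the positivity check $\Phi(\xi,0)<0$ in case (ii) --- merely flesh out details the paper's two-line proof leaves to the Hartman--Grobman discussion, and they are sound.
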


\begin{proof}
Setting $\xi=\eta=P=0$ in \eqref{eq:phi'}, we obtain $\phi'(0)=u'(a)/\beta v'(b)>0$. The claim follows from the definition of local (in)determinacy.
\end{proof}

Combining Lemma \ref{lem:ss_bubbly} and Proposition \ref{prop:b>0}, we conclude that whenever a monetary steady state exists, there exists an asymptotically monetary equilibrium as well as a continuum of monetary but asymptotically non-monetary equilibria.

We next study the local determinacy of the monetary steady state. The following proposition shows that the monetary steady state is locally determinate whenever the relative risk aversion of the old is sufficiently low.

\begin{prop}\label{prop:locdet_b}
Let $u'(a)<\beta v'(b)$. Then the monetary steady state $P>0$ is locally determinate if $\gamma_v(b+P)<1+b/P$, where $\gamma_v(z)\coloneqq -zv''(z)/v'(z)>0$ denotes the relative risk aversion of $v$.
\end{prop}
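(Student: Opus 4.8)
The plan is to show that the hypothesis $\gamma_v(b+P)<1+b/P$ forces $\phi'(P)>1$, so that $\abs{\phi'(P)}>1$ and local determinacy follows from the Hartman--Grobman criterion recorded just before the statement. First I would evaluate \eqref{eq:phi'} at the steady state $\xi=\eta=P$, obtaining
\begin{equation*}
\phi'(P)=\frac{u'(a-P)-u''(a-P)P}{\beta v'(b+P)+\beta v''(b+P)P}.
\end{equation*}
The key auxiliary facts are the steady-state identity from the proof of Lemma \ref{lem:ss_bubbly}, namely $\psi(P)=0$, i.e.\ $u'(a-P)=\beta v'(b+P)$, together with $\psi'(P)=u''(a-P)+\beta v''(b+P)<0$.

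Next I would check that the denominator is strictly positive. Writing $z=b+P$, the denominator equals $\beta v'(z)\bigl(1+v''(z)P/v'(z)\bigr)=\beta v'(z)\bigl(1-P\gamma_v(z)/z\bigr)$, and since $z/P=1+b/P$, the hypothesis $\gamma_v(z)<1+b/P=z/P$ is exactly $P\gamma_v(z)/z<1$, so the denominator is positive. In particular $\Phi_\eta(P,P)\neq 0$, so the implicit function $\phi$ is well defined near $P$ and the linearization argument applies; moreover $\phi'(P)>0$, so it suffices to show $\phi'(P)>1$.

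Finally I would compare numerator and denominator directly. Their difference is
\begin{equation*}
\bigl(u'(a-P)-\beta v'(b+P)\bigr)-\bigl(u''(a-P)+\beta v''(b+P)\bigr)P=-\psi'(P)\,P>0,
\end{equation*}
using $\psi(P)=0$ and $\psi'(P)<0$. Hence the numerator strictly exceeds the (positive) denominator, so $\phi'(P)>1$, giving $\abs{\phi'(P)}>1$ and the asserted local determinacy.

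I do not expect any substantive obstacle; the only point needing care is the sign of the denominator $\Phi_\eta(P,P)$, which is precisely where the hypothesis $\gamma_v(b+P)<1+b/P$ is consumed. Without it, $\Phi_\eta$ could vanish (implicit function theorem fails) or be negative, in which case $\phi'(P)<0$ and one would have to argue separately whether $\phi'(P)<-1$; the clean conclusion $\phi'(P)>1$ genuinely relies on the stated bound on the old agent's relative risk aversion.
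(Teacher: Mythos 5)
Your proposal is correct and follows essentially the same route as the paper: you use the hypothesis $\gamma_v(b+P)<1+b/P$ exactly to establish $\Phi_\eta(P,P)>0$ (the positivity of the denominator of $\phi'(P)$), and then the steady-state identity $u'(a-P)=\beta v'(b+P)$ together with strict concavity to conclude numerator minus denominator equals $-\psi'(P)P>0$, i.e.\ $\phi'(P)>1$ — which is the paper's computation $\Phi_\xi+\Phi_\eta<0$ written out termwise. No gaps.
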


\begin{proof}
Setting $\xi=\eta=P>0$ in \eqref{eq:Phi_partial} and using the equilibrium condition \eqref{eq:eqcond}, at the monetary steady state we obtain
\begin{equation*}
    \Phi_\xi+\Phi_\eta=u''(a-P)P+\beta v''(b+P)P<0.
\end{equation*}
If $\Phi_\eta>0$, dividing both sides by $\Phi_\eta$, we obtain
\begin{equation*}
    0>\frac{\Phi_\xi}{\Phi_\eta}+1=-\phi'(P)+1\iff \phi'(P)>1,
\end{equation*}
implying local determinacy. Therefore it suffices to show $\Phi_\eta>0$. Dividing \eqref{eq:Phi_eta} by $\beta v'(b+\eta)>0$ and setting $\eta=P>0$, we obtain
\begin{equation*}
    \frac{\Phi_\eta}{\beta v'(b+P)}=1-\gamma_v(b+P)\frac{P}{b+P}.
\end{equation*}
Therefore $\Phi_\eta>0$ if $\gamma_v(b+P)<1+b/P$.
\end{proof}

Obviously, determinacy of the monetary steady state obtains under the condition $\RRA\le 1$. Equilibrium indeterminacy in OLG models is a common feature \citep{GeanakoplosPolemarchakis1991}. However, there are a few uniqueness results. \citet{BalaskoShell1981_3} show that, with Cobb-Douglas utility, equilibrium is unique without money and the equilibrium manifold is one-dimensional (in particular, a continuum) with money (if nonempty). \citet{KehoeLevineMas-CollelWoodford1991} prove many results under the assumption of the gross substitutes property, including
\begin{enumerate*}
    \item uniqueness of equilibrium if the aggregate endowment has finite value (Theorem A),
    \item uniqueness of monetary and non-monetary steady states (Theorem C),
    \item convergence to steady states (Theorem D), and
    \item uniqueness of equilibrium converging to monetary steady state (Theorem F).
\end{enumerate*}

\subsection{Bewley Models}

We next consider the so-called canonical ``Bewley'' model \citep{Bewley1986,Imrohoroglu1989,Huggett1993,Aiyagari1994}, also known as the ``Bewley-\.{I}mrohoro\u{g}lu-Huggett-Aiyagari'' model (or some permutation of a subset of these names). Variants of these models have been widely used for policy analysis, understanding inequality, and asset pricing.\footnote{See \citet{CherrierDuarteSaidi2023} for a recent review.} In the Bewley model, each of a large number of (ex ante identical) agents solves a recursive problem like
\begin{subequations}
\begin{align}\label{eq:bewley}
    & V(k,x) =  \max_{c\ge 0,k' \ge -b} \set{ u(c) + \beta \E_x \left[ V(k',x') \right] } \\
    & \st~c + k' =wy(x) + Rk,
\end{align}
\end{subequations}
where $V$ is the value function, $u$ is the Bernoulli utility function, $\beta$ is the discount factor, $k$ is savings, $b$ is a borrowing limit, $R$ is the gross interest rate, and $w$ is the wage. $y(x)$ is the agent's exogenous labor supply, and the only randomness stems from the evolution of the agent-specific state variables $x$ ($\E_x$ is expectation conditional on $x$). Typically, either the asset $k$ is in zero net supply (with exogenous labor income) and $R$ clears the market in equilibrium, as in \citet{Huggett1993}, or a representative firm rents labor and capital from the agents and $R$ and $w$ clear the capital/labor markets in equilibrium, as in \citet{Aiyagari1994}. While in general the prices depend on the cross-sectional distribution of assets and agent-specific state variables, the literature usually focuses on stationary equilibria (with constant prices), in which the wealth and income distributions are constant (even though there is churn within them). Unlike in our example \eqref{eq:opt}, the market here is incomplete because the borrowing constraint and small number of assets (one) prevent the agents from trading all (state- and time-dependent) goods.

Establishing uniqueness or constructing examples with multiple equilibria is challenging in this context: closed-form solutions are not usually available for Bewley models, and most papers solve them with numerical methods. Fortunately, there are some new results in the recent literature, but these include examples with even multiple \emph{stationary} equilibria. \citet{Toda2017JEDC} shows how to construct \citet{Huggett1993}-type economies with multiple stationary equilibria in the case of CARA Bernoulli utility without a borrowing constraint. This is perhaps surprising on account of the findings of \citet{HensLoeffler1995} reviewed in Section \ref{sec:classic}. More specifically, \citet{Toda2017JEDC} shows that the particular income process is important for the number of equilibria. If $y(x)$ follows an AR(2) or ARMA(1,1) process, then you can construct examples with at least three stationary equilibria. On the other hand, \citet{Toda2017JEDC} provides two cases where the stationary equilibrium is unique in the CARA \citet{Huggett1993} economy without a borrowing constraint:
\begin{enumerate*}
    \item AR(1) labor income with an arbitrary shock distribution (which is the model of \citet{Wang2003}) and
    \item Gaussian persistent-transitory (PT) labor income with non-negative correlation between the components.
\end{enumerate*}
The PT case, in which labor income is the sum of an AR(1) and white noise, is widely-used in applied quantitative work.\footnote{\citet{Toda2017JEDC} shows, however, that multiple equilibria are possible if the PT components are negatively correlated. Note also that \citet{Toda2017JEDC}'s uniqueness results do not rule out multiple nonstationary equilibria, which \citet{Calvet2001} shows can arise in even more restricted settings such as i.i.d.\ Gaussian income.} Some earlier papers, in a different context, also explore uniqueness in the CARA case. \citet{Calvet2001} examines a CARA-Gaussian GEI economy without aggregate risk and with many time periods and shows there is a unique equilibrium with either a finite horizon or complete markets. However, his model features equilibrium indeterminacy with infinite horizon and incomplete markets due to the lack of terminal condition at $T=\infty$. \citet{AngeletosCalvet2005} (\emph{Journal of Mathematical Economics}) extend \citet{Calvet2001}'s model with production and show by computing the eigenvalues of the linearized system around the steady state, the equilibrium can be locally determinate or indeterminate (\ie, there exists a continuum of equilibrium paths converging to the steady state) depending on the model parameters.

In the \citet{Aiyagari1994} version of the Bewley model with Cobb-Douglas production and CRRA Bernoulli utility, \citet[Figure 1]{Acikgoz2018} presents a parameterization that yields a numerical example with multiple stationary equilibria, although he notes the parameter values are ``rather extreme'' (\eg, $\beta = 0.1$, full depreciation of capital). Nevertheless, he explains that multiplicity stems from the familiar income effect domination outlined in Section \ref{sec:anything} (and not a general equilibrium effect through the wage). With incomplete markets, the agents heavily weight worst-case scenarios, and permanent income in the worst-case scenario is increasing in the interest rate for most agents in the ergodic distribution. Therefore, depending on the parameters, aggregate capital supply (from the agents to the firm) may be declining in the interest rate as many agents increase consumption today in response to a rising interest rate. Since firm capital demand is always decreasing in the interest rate, the downward-sloping supply can induce multiplicity.\footnote{\citet{Kirkby2019}, on the other hand, argues there is near certainty the stationary equilibrium in \citet{Aiyagari1994} is unique for standard calibrations: his algorithm theoretically converges to all equilibria in this context, and he finds only one in each case with standard parameters.} 

While we have seen that uniqueness from CARA does not generally carry over to the Bewley model setting, at least in the \citet{Huggett1993} model with loose credit, one assumption from the complete markets general equilibrium literature does: $\RRA \le 1$. \citet{Light2020} shows, for the \citet{Aiyagari1994} model, that constant returns to scale (CRS) production with elasticity of substitution (ES) $\ge 1$ (\eg, Cobb-Douglas) and CRRA utility with $\RRA \le 1$ are sufficient for a unique stationary equilibrium. In short, $\RRA \le 1$ is sufficient for substitution effect dominance (as with complete markets) and $\ES \ge 1$ prevents any complementarity stemming from production (as implied in \citet{Acikgoz2018} for Cobb-Douglas). \citet{Light2023} also proves a version of his result in a \citet{Huggett1993} model with many goods, CES utility, i.i.d.\ endowments, and the natural borrowing limit, showing there is a unique stationary equilibrium if $\ES \ge 1$.

$\RRA \le 1$ is also sufficient for uniqueness in \citet{Toda2019JME}. He studies a \citet{Huggett1993}-type model with random (Markov) discount factors, constant income,  no credit limit (besides the natural one), and the perpetual youth model of life/death (to get a stationary wealth distribution, which he proves has a Pareto tail). Under sufficient impatience and CRRA utility with $\RRA \le 1$, there exists a unique stationary equilibrium.

Finally, \citet{AchdouHanLasryLionsMoll2022} show that even without the CRRA form, $\RRA \le 1$ is sufficient for a unique stationary equilibrium in their continuous time \citet{Huggett1993} model without borrowing. This result carries through to the continuous \citet{Aiyagari1994} model with CRS production, but as in \citet{Light2020}, one more assumption is needed: the ES must be strictly larger than the capital share of output.

The condition $\RRA \le 1$ also appears in infinite horizon models with limited risk-sharing due to collateral constraints or ``limited pledgeability'' of endowments. \citet{BloiseCitanna2015}, in the \emph{Journal of Mathematical Economics}, build on \citet{GottardiKubler2015} and prove $\RRA \le 1$ implies uniqueness in an infinite horizon endowment economy in which the only friction is limited commitment (markets are otherwise complete).

While it appears $\RRA \le 1$ (and the resulting dampening of income effects) is a fairly robust sufficient condition for uniqueness across heterogeneous agent macro models with various frictions, and CARA works in some settings, the field needs more results. $\RRA>1$ is widely-used in applied macro models, but at the same time, there are very few non-uniqueness examples. Given the computational nature of the field, one wonders whether most papers are calculating a unique equilibrium, or if existing numerical methods are simply missing the multiplicity.\footnote{The working paper by \citet{Proehl2018} suggests there is hope in establishing uniqueness of recursive equilibrium under quite general conditions in Bewley models with \emph{aggregate} risk, such as \citet{KrusellSmith1998}.}

\section{Trade and Production}

Thus far we have focused on endowment economies, except with respect to some of the macro papers in Section \ref{sec:macro}, for which the issue of uniqueness was confounded with market incompleteness and credit frictions. But simply introducing production into our frictionless, complete markets example \eqref{eq:opt} unfortunately still renders uniqueness much more elusive relative to the endowment economy case. \citet{Kehoe1984}, for example, provides a three-equilibrium example with constant returns production, four agents/goods, and \emph{Cobb-Douglas} utility.\footnote{See also \citet{Kehoe1985}.}


The basic issue is that with production, the supply of goods also moves in response to price changes, and firms' factor demands introduce a new income effect channel (through, for example, wages or capital income). This opens the door for highly non-monotonic market excess demands, even in cases where the excess demand is well-behaved for fixed endowments. We have already discussed several examples of non-uniqueness with production \citep{Kehoe1984,Kehoe1985,AngeletosCalvet2005,Acikgoz2018}, and the uniqueness results of \citet{Light2020} and \citet{AchdouHanLasryLionsMoll2022} require restrictions on production elasticities, even with agent $\RRA \le 1$.

The general equilibrium literature on uniqueness with production is perhaps thinner compared with the endowment economy case. Nonetheless, there are many earlier papers on this topic, for example the works by Timothy Kehoe in the 1980s, and the surveys of \citet{Kehoe1998}, \citet{Mas-Colell1991}, and \citet{MWG1995} provide citations. While the generality considered in this earlier literature reaches somewhat pessimistic conclusions,\footnote{\citet{KehoeWhalley1985}, for example, write:
\begin{quote}
    ``It remains an open problem whether or not non-uniqueness is largely a theoretical curiosum, similar in practical significance to the possibility of the simplex algorithm requiring an exponential number of steps to solve a linear programming problem, or whether it is a prevalent feature even in empirically based general equilibrium models. The difficulty is that it is next to impossible to establish uniqueness of equilibrium for large dimensional models since no algorithm is known that finds all of the equilibria of a model.''
    \end{quote}} 
the more applied fields of trade, production networks, and spatial models have recently made much progress. Indeed, one of the main applications of general equilibrium theory is international trade, which we have not yet discussed in this review. Large-scale equilibrium models with production and dozens of regions and industries are now widely-used for policy analysis.\footnote{See, for example, \citet{whatif2022}.} As the issue of uniqueness is tightly linked to the usefulness of comparative static analysis, trade theory has recently become a key field for the study of uniqueness of competitive equilibrium with production.

One of the earliest uniqueness results with a production network is the ``non-substitution theorem,'' often attributed to \citet{Samuelson1951}. Suppose each of the $J$ goods is produced by a distinct competitive firm (or location). Each firm $j$ produces it good according to the CRS production function $Y^j = F^j(y^j_1,\dots,y^j_J,\ell^j)$, where $y^j_k$ is the input of good $k$ and $\ell^j$ is the input of labor. The key assumptions (other than CRS) are that
\begin{enumerate*}
    \item each good is specific to a particular producer and
    \item there is only one ``factor,'' which we have called labor but could be interpreted as capital or land.
\end{enumerate*}
Agent income, the right-hand-side of the budget constraint in example \eqref{eq:opt}, becomes $wL^i$, where $w$ is the wage (to be determined in equilibrium) and $L^i$ is an exogenous supply of labor. Equilibrium with production is defined to be consumption allocations $x^i$, goods prices $p$, wage $w$, firm inputs $(y^j_k,\ell^j)$, and production $Y^j$ such that:
\begin{enumerate*}
    \item given prices, consumption choices are optimal,
    \item firms maximize profits given prices (implying profits are zero by CRS),
    \item the labor market clears, $\sum_i L^i = \sum_j \ell^j$, and
    \item the goods markets clear, $\sum_i x^i_j + \sum_k y^k_j = Y^j$, where output $Y_j$ is given by the production functions.
\end{enumerate*}

The basic version of the non-substitution theorem says that under these conditions, relative prices (and thus firm input ratios) are uniquely determined by the supply side of the economy. Since the demands of the agents are unique given prices (including the wage), the competitive equilibrium with production is also unique. A concise proof of the uniqueness of relative prices is provided in \citet{Stiglitz1970}, which we specialize here to CES for clarity and briefly sketch. In the CES case for $F^j$, marginal cost pricing takes the form
\begin{equation*}
    p_j = c^j(w,p) \coloneqq \left(a^j_\ell w^{1-\sigma^j} + \sum_{k \in J} a^j_k p_k^{1-\sigma^j} \right)^\frac{1}{1-\sigma^j},
\end{equation*}
where the exogenous production weights, the $a$'s ($>0$), sum to one, and $\sigma^j>0$ is the elasticity of substitution. Note that $c^j$ is concave and positive. Using the normalization $w =1$ (instead of $p_1 = 1$ from earlier) and stacking the $c^j$'s to form $c$, equilibrium prices satisfy $p = c(1,p)$. Following \citet{Stiglitz1970}, the non-substitution theorem is equivalent to $p = c(1,p)$ having a unique solution. Suppose $p^*$ satisfies $p^* = c(1,p^*)$. Taking a first order approximation of $c^j(1,p)$ at $p^*$, we can form a plane $p_j = b^j_\ell + \sum_k b^j_k p_k$ tangent to $c^j$ at $p^*$, where the (strictly positive) $b$'s correspond to output shares in production. Considering all $j \in J$, in matrix notation we have $p = b_\ell + Bp$. Since both $p - Bp$ and $B$ are positive, $B$ is ``Leontief'' (or ``productive'') and, hence, $I-B$ is invertible. We have $p^* = (I-B)^{-1}b_\ell$. Now suppose $p^{**}$ is also an equilibrium. By the concavity of $c^j$, $p^{**} = c(1,p^{**}) \le  b_\ell + Bp^{**}$ (points other than $p^*$ are ``below'' the plane), so it follows $p^{**} \le p^*$. But we could have formed the plane at $p^{**}$, so we also have $p^{**} \geq p^*$, implying the prices are unique.

This specific non-substitution argument does not, however, extend to the case with many factors (\eg, the agents or ``countries'' have differentiated labor supplies). The reason is that the supply side alone is unable to determine the relative factor prices. While goods prices are unique here given wages, the determination of wages depends also on the demand side, and the usual assumptions for uniqueness on the demand side will not necessarily suffice due to additional feedback: changing wages lead to changing demands and goods prices, which in turn affect labor demand and hence wages.

There are several other version of the non-substitution theorem, but we now move forward in time to highlight more recent results. \citet{AcemogluAzar2020} provide an important generalization of \citet{Samuelson1951}'s non-substitution theorem. Specifically, they allow general frictions (which could represent taxes, credit frictions, markups, etc.) and, moreover, allow the CRS firms to choose their suppliers (the set of which affects productivity). So in \citet{AcemogluAzar2020}, the production network is \emph{endogenous}. Nonetheless, a proof with a flavor of \citet{Stiglitz1970}'s goes through, prices are uniquely pinned-down by the supply side, and the equilibrium quantities are generically unique (with some mild assumptions on utility).

\citet{ScarfWilson2005} provide a result complementary to the non-substitution theorem. In the Ricardian model of \citet{ScarfWilson2005} there are many countries, each with a fixed supply of labor that can produce any of the $J$ goods according to a CRS activity matrix. For a particular good, only the lowest cost countries produce and export. While there are many factors (the labor in the different countries), the firms only use local labor as an input (and not the goods). \citet{ScarfWilson2005} show that if the demand side satisfies gross substitutes, then equilibrium is unique,\footnote{See also \citet{AdaoCostinotDonaldson2017}.} which is perhaps surprising given the result of \citet{Kehoe1984}.

More recently, \citet{KucheryavyyLynRodriguezClare2023a} establish uniqueness in an ``Armington'' model with many countries and industries but where, as in \citet{ScarfWilson2005}, only local labor is used in production. ``Armington'' (in contrast with ``Ricardian'') means that consumers exogeneously demand industry bundles formed from other countries' goods according to CES (with elasticity $\sigma_k>1$ in each industry $k$). They allow for a scale externality (EES) governed by parameter $\phi_k$ in each industry $k$, so that industries with more labor become more productive. With Cobb-Douglas utility over industry bundles in each country, they prove that if trade is frictionless, there are two countries, or we have a small open economy, then equilibrium is unique if the EES parameter is less than inverse trade elasticity: $\phi_k \le 1/(\sigma_k -1)$.\footnote{The case with equality requires a condition on trade costs.}

The trade/production results we have discussed so far feature either
\begin{enumerate*}
    \item a production network of single-good producing firms with a single supply of labor shared globally or
    \item countries that produce many goods but only using local labor (so there are many, country-specific factors) but not intermediates.
\end{enumerate*}
But state-of-the-art trade models contain both features: a global production network \emph{and} country-specific factors of production. The difficulty in establishing uniqueness, as we described above, is that standard assumptions on utility and production do not necessarily control income effects stemming from the factor prices (\eg, wages).

Two key seminal works in quantitative trade theory are the Ricardian models of \citet{eaton-kortum2002} and \citet{CaliendoParro2015} (which is the multi-industry version of the former paper). In this type of model, consumers and producers around the world demand a continuum of intermediates, aggregated via CES, in each industry (in \citet{eaton-kortum2002} there is only one industry). Any of the countries, each of which has its own labor, consumers, and firms, can produce the continuum of different varieties, but only the lowest cost suppliers  produce and export (accounting for icebergs and tariffs). Assuming firms within a country draw productivities from Fr\'{e}chet distributions, each country ends up being the minimum cost seller in a positive mass of varieties, and equilibrium is ``as if'' the countries produced differentiated goods demanded by the other countries according to exogenous weights (as in an Armington model). While this setting features many factors of production (the labor in each country) and a global production network, \citet{AlvarezLucas2007} are able to provide sufficient conditions for uniqueness. Assuming
\begin{enumerate*}
    \item production of both final consumption (``nontradeables'') and intermediate goods (``tradeables'') are Cobb-Douglas over labor and the CES bundle of traded inputs,
    \item countries are identical except with respect to labor endowments, average productivity, and trade barriers,
    \item  trade barriers are not too high (with respect to conditions that depend on the elasticity and labor shares), and
    \item tariffs are not source specific,
\end{enumerate*}
then equilibrium is unique if the labor share for nontradables (final goods) is greater than the labor share in the tradeable intermediates sector.

Most of the papers we have discussed in this article approach the uniqueness question by attempting to establish gross substitutes or WARP for market demand or by studying the slope of demand local to equilibrium. In a recent sequence of papers, Treb Allen, Costas Arkolakis, and their coauthors take a somewhat different but highly fruitful approach. In short, they examine the spectral radius of a matrix bounding the model's elasticities. \citet{AllenArkolakisTakahashi2020} write down a general single industry model with many agents, the optimality and market clearing conditions of which subsume many trade and economic geography models, including \citet{eaton-kortum2002} (and its Armington analogue). As they show, in a large class of models with CES features, you get similar CES demands (the ``gravity'' equations) and supply curves (from the production functions) that, once applying market clearing, boil down to the same system of equations for prices. Generalizing the result of \citet{AllenArkolakis2014},\footnote{See also \citet{KucheryavyyLynRodriguezClare2023b}.} \citet{AllenArkolakisTakahashi2020} provide several sufficient conditions for uniqueness, the most important being that the demand elasticity is weakly negative and the supply elasticity is weakly positive with respect to input prices. In the context of trade models along the lines of ones we have already discussed, the result is beautifully stark: a CES demand parameter $\geq 1$  and a production labor share $\le 1$ are together sufficient for uniqueness. For example, in an Armington model where each country produces (via Cobb-Douglas) a particular good using a local labor endowment and and CES bundle of all goods (which is also consumed by the representative agent in each country), equilibrium is unique if the (common) CES elasticity is $\geq 1$ and the (common) labor share is $\le 1$.\footnote{This result requires some assumptions on icebergs, but they are weak.}

\citet{AllenArkolakisLi2023} substantially generalize \citet{AllenArkolakisTakahashi2020}, considering the equilibrium conditions (in their notation)
\begin{equation*}
    x_{ih} = \sum_{j=1}^N f_{ijh} (x_{j1},\dots,x_{jH}),
\end{equation*}
where $x_{ih}$ is the ``outcome'' for location $i \in N$ for interaction $h \in H$, and differentiable $f_{ijh}: \R^H_{++} \to \R_{++}$ governs the interactions (according to, say, production functions/aggregators). This setting covers versions of dynamic or multi-industry Armington or \citet{eaton-kortum2002} models, but those cases are a small subset of the admissible spatial models. The key object in the paper is the bounding matrix $\mathbf{A}$ with $\mathbf{A}_{hh'} \coloneqq \sup_{i,j} \abs{\frac{\partial \ln f_{ijh}}{\partial \ln x_{jh'}}} $, that is, the maximum \textit{cross-location} interaction strength. Their theorem says that equilibrium is unique when the spectral radius of $\mathbf{A}$ is $\le 1$, although the radius of one case (which arises in CES models) requires some additional assumptions on $f_{ijh}$.\footnote{For example, to apply the theorem to a multi-sector \citet{CaliendoParro2015}-type model in which the spectral radius of $\mathbf{A}$ is one, the theorem requires that labor shares do not vary by country. Adopting an approach related to (but distinct from) \citet{AllenArkolakisLi2023}, the working paper by \citet{BifulcoGlueckKrebsKukharskyy2022} makes progress on establishing uniqueness in trade models with location-specific shares and elasticities.} The key insight is that it is much easier to analyze the ``sufficient'' $H \times H$ matrix $\mathbf{A}$ than either the $N^2 \times H$ functions $f_{ijh}$ or the full $NH \times NH$ Jacobian. For example, in the one-industry \citet{AllenArkolakisLi2023} model, $\mathbf{A}$ is $2 \times 2$, even though there are many locations and goods.\footnote{See also \citet{AllenArkolakis2022} for uniqueness results that use a similar technique in the context of transportation and traffic.}

\section{Summary and Looking Forward}

Table \ref{tab:papers} lists examples, in chronological order, of sufficient conditions on preferences, production technology, and endowments that ensure the uniqueness of competitive equilibria. As is clear from the table, $\RRA\le 1$ (or equivalently $\ES\ge 1$) is a relatively robust sufficient condition for equilibrium uniqueness. However, this is a quite strong parametric restriction. The result of \citet{Eisenberg1961} (see also \citet{Chipman1974}, \citet{HensPilgrim2002}, and \citet{TodaWalsh2020}) is remarkable in that equilibrium uniqueness obtains under arbitrary heterogeneous homothetic preferences, although one needs to assume collinear endowments (in the context of financial markets, agents are endowed with some fraction of the index fund). 

As we have seen, establishing uniqueness with $\RRA>1$ and non-collinear endowments is possible with a small number of agents or goods. We hope that the joint restrictions on endowments and preferences from this more stylized setting will help researchers find uniqueness conditions applicable in quantitative macro and trade models. Another potential way forward is the further development and implementation of algorithms known to find all equilibria for a given set of parameters (see, for example, \citet{KublerSchmedders2010} and \citet{Kirkby2019}). Randomizing over realistic parameter ranges, this approach can tell us whether non-uniqueness is, echoing \citet{Kehoe1984} in the \emph{Journal of Mathematical Economics}, ``pathological'' or ``commonplace'' in applied settings.

\begin{table}[!htb]
	\centering
	\caption{Uniqueness Results}
	\resizebox{\textwidth}{!}{
	\begin{tabular}{lll}
	\toprule
	\multicolumn{1}{c}{Source} & \multicolumn{1}{c}{Sufficient Condition}  & \multicolumn{1}{c}{Notes} \\
	\midrule
	\textbf{Classic Results} & &  \\
    \citet{Wald1951} & WARP for market demand & See \citet{Kehoe1998} \\
    \citet{Wald1951} & Gross substitutes & See \citet{ArrowHahn1991}  \\
    \citet{Gorman1953} & Identical homothetic utility & See \citet{Kehoe1998} \\
    \citet{Eisenberg1961} & Homothetic utility, collinear endowments & See also \citet{Chipman1974} \\
    \citet{MitjushinPolterovich1978} & $\RRA<4$, collinear endowments &     \\
    \citet{BalaskoShell1981_3} & Cobb-Douglas & OLG model \\
    \citet{KehoeLevineMas-CollelWoodford1991} & Gross substitutes & OLG model \\
    \citet{HensLoeffler1995}  & $\RRA \le 1$, separable & See also \citet{Slutsky1915}   \\
    \citet{HensLoeffler1995}  & CARA & See also \citet{Rubinstein1974} \\
    \citet{HensLoeffler1995}  & Common beliefs, constant supply, separable &    \\
    & Quadratic utility, common beliefs & Source unknown \\
	\midrule
    
    \textbf{Recent GE} & &  \\
    \citet{Chipman2010}  & CRRA, symmetry, $2$ goods/agents, $\RRA \le 2$ & Common Bernoulli utility  \\
    \citet{LoiMatta2018} & Eq. manifold zero curvature, $2$ goods & $e_i,x_i \in \R^J$  \\
    \citet{GeanakoplosWalsh2018Uniqueness}  & DARA, common endowments, $2$ goods & Common Bernoulli utility  \\
    \citet{GeanakoplosWalsh2018Uniqueness}  & CRRA, ordered endowments, $2$ goods & Common Bernoulli utility  \\
	\citet{Gimenez2022}  & Offer curve monotonicity for one good & $2$ goods/agents \\
	\citet{Won2023}  & CRRA, endowment growth bounds \eqref{eq:Won} & $2$ goods (can extend to $J$)  \\
    \citet{LoiMattaUccheddu2023} & Eq. manifold zero curvature, $2$ agents & $e_i,x_i \in \R^J$  \\
    \citet{LoiMatta2023} &HARA \eqref{eq:hara}, $\RRA \in \left(1,\frac{I}{I-1} \right]$, $2$ goods & $I$ agents  \\
   	\midrule
  
    \textbf{Macro} & &  \\
	\citet{Calvet2001}  &  CARA-normal (i.i.d.), no portfolio constraints  & $T<\infty$ GEI, no agg. risk  \\
	\citet{BloiseCitanna2015}  &  $\RRA \le 1$, $T=\infty$ limited commitment model  & \citet{GottardiKubler2015}   \\
	\citet{Toda2017JEDC}  & CARA, AR(1)/PT income, no credit limit & \citet{Huggett1993} model  \\
    \citet{Toda2019JME}  & CRRA, $\RRA \le 1$, no credit limit, $\beta$ risk only&  Life/death \citet{Huggett1993}  \\
	\citet{Light2020}  & CRRA, $\RRA \le 1$, firm $\ES \ge 1$ & \citet{Aiyagari1994} model  \\
    \citet{AchdouHanLasryLionsMoll2022}  & $\RRA \le 1$, no borrowing & Continuous \citet{Huggett1993}  \\
    \citet{AchdouHanLasryLionsMoll2022}  & $\RRA \le 1$, $\ES>$ capital share, no borrowing & Continuous \citet{Aiyagari1994}  \\
    \citet{Light2023}  & CES utility ($\mathrm{ES} \ge 1$), i.i.d. income  & Multi-good \citet{Huggett1993}  \\

	\midrule
    \textbf{Trade/Production} & &  \\
	\citet{Samuelson1951}  & CRS prod. network, $1$ factor, $1$ firm per good  & See also \citet{Stiglitz1970}  \\
    \citet{ScarfWilson2005}  & CRS prod., gross subst., no intermediates  & Many factor Ricardo  \\
    \citet{AlvarezLucas2007}  & Tradeable labor share $\le$ nontradeable share & \citet{eaton-kortum2002} \\
    \citet{AcemogluAzar2020}  & Generalization of \citet{Samuelson1951}  & Frictions/endog. network  \\
    \citet{AllenArkolakisTakahashi2020} & Demand elasticity $\le 0 \le $ supply elasticity & Universal gravity model  \\
   	\citet{AllenArkolakisLi2023} & Elasticity bound spectral radius $\le 1$ & General spatial model  \\
    \citet{KucheryavyyLynRodriguezClare2023a} & $\mathrm{EES}\times (\ES-1)\le 1$, $2$ countries or no frictions & EES Armington, no interm.  \\
	
	\bottomrule
	\end{tabular}%
}
\footnotesize
\raggedright
Note: The ``Sufficient Condition'' column gives our interpretation of each paper's key conditions for uniqueness of competitive equilibrium. However, as it is obviously impossible to fit full model details in a single table, we have omitted a variety of more mild or subtle assumptions for many of the papers. See the main text or the particular papers for further detail.
	\label{tab:papers}%
\end{table}%

\clearpage
\newpage

\printbibliography

\end{document}